\def\BibTeX{{\rm B\kern-.05em{\sc i\kern-.025em b}\kern-.08em
    T\kern-.1667em\lower.7ex\hbox{E}\kern-.125emX}}
\theoremstyle{definition}
\newtheorem{theorem}{Theorem}
\theoremstyle{definition}
\theoremstyle{definition}
\theoremstyle{definition}
\theoremstyle{definition}
\theoremstyle{definition}
\begin{document}
\title{A Note on Low-Pass Filter Conditioning \\ for Current-Mode Control}

\onecolumn

\author{\IEEEauthorblockN{Xiaofan Cui and Al-Thaddeus Avestruz}
\thanks{*Xiaofan Cui and Al-Thaddeus Avestruz are with the Department
of Electrical Engineering and Computer Science, University
of Michigan, Ann Arbor, MI 48109, USA cuixf@umich.edu, avestruz@umich.edu.}
}
\IEEEoverridecommandlockouts
\IEEEpubid{\makebox[\columnwidth]{\hfill} \hspace{\columnsep}\makebox[\columnwidth]{ }}
\maketitle
\IEEEpubidadjcol

\begin{abstract}
\!\!Low-pass filter is a classic control conditioning approach for high frequency current-mode control. However, no existing literature discusses the large-signal stability criterion for the current-mode control with low-pass filters. This paper provides a mathematically rigorous large-signal stability criterion. The result can directly benefit the practical engineering implementation of the low-pass filter in high-frequency current-mode control.
\end{abstract}
\section{Introduction}
Low-pass filters, the traditional way to signal condition the interference, stabilize the current control loop by attenuating the amplitude of interference. 
However, as shown in Fig.\;\ref{fig:filter1}, the actual inductor current may be distorted. Low-pass filters are traditionally not recommended to cut off before the switching frequency; however, this makes the filters unable to effectively suppress the interference whose spectrum is near or below the switching frequency. Because of our results on the robustness of peak current-mode control, we can allow the cut-off frequency of filters to be well below the switching frequency and still have good performance. 

\section{System description}
We take constant off-time current-mode control as an example. The current control loop using constant off-time can be modeled as
\begin{align} 
    i_p[n] &= i_p[n-1] -m_2t_{\text{off}} +m_1\,t_{\text{on}}[n],\nonumber \\
    \label{eqn:filter_nonlin_sys_fb} i_c [n] &= h_0(T[n]) i_c[n-1] + \left(i_m(t) u(t) \ast h(t)\right)\bigg\rvert_{t=t_{\text{on}}[n]},
\end{align}
where ${\ast}$ is the convolution operator, $h(t)$ is the impulse response function of the low-pass filter, $h_0(t)$ is the zero input response of filter, and $u(t)$ is the unit step function.
Equation (\ref{eqn:filter_nonlin_sys_fb}) captures the low-pass filter response.
The variable $i_m(t)$ represents the current sensor output and filter input;
\begin{figure}[htp]
    \centering
    \includegraphics[width=8cm]{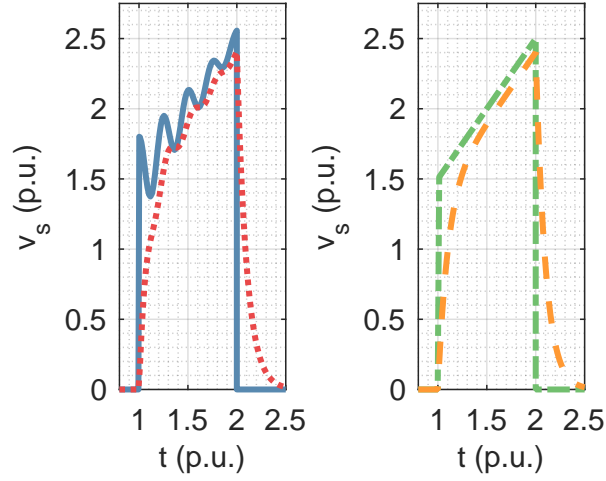}
    \caption{Comparison of current-sense voltage with/without the interference and with/without the filter. ({\color{blue}---}) is for the current-sense voltage with interference and without filter. ({\color{red}$\cdots$}) is for the current-sense voltage with interference and with filter. ({\color{green}-\,$\cdot$\,-}) is for the current-sense voltage without interference and without filter. ({\color{orange}-\,-\,-}) is for the current-sense voltage without interference with filter.}
    \label{fig:filter1}
\end{figure}
$i_m(t)$ can be expressed as the additive summation of the inductor current on the bottom switch and interference 
\begin{align}
    i_m(t) = i_p[n-1] - m_2T_{\text{off}} + m_1 t + w(t).
\end{align}

\begin{figure}[htp]
\centering
\subfigure[Block diagram.]{\includegraphics[width=8 cm]{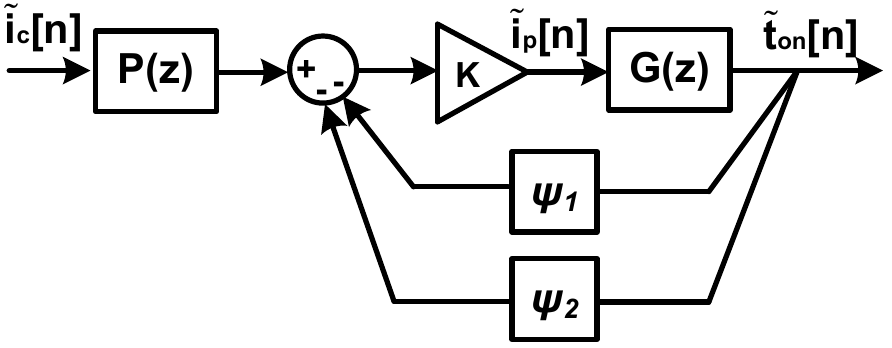}
       \label{fig:filteredccb_fb}
}
\subfigure[Root locus without (left)/with (right) filter. {\color{green}---} is for the negative feedback; {\color{blue} -\,-\,-} is for the positive feedback.]{
    \includegraphics[width=8 cm]{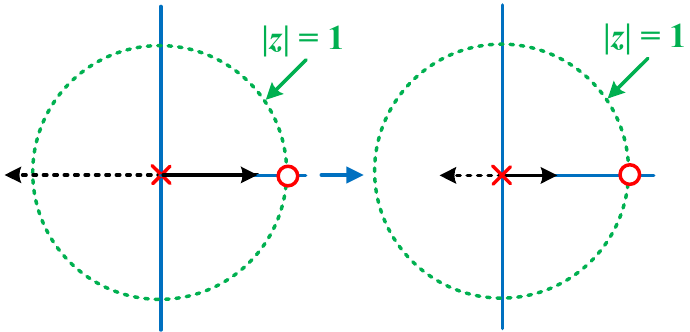}
    \label{fig:filter_cotcm}
}
\caption{Modeling of the current control loop with filter. The block is used in constant on(off)-time control.}
\end{figure}

We exemplify this idea with a first-order low-pass filter with the impulse response function $h(t) = (e^{(-t/\tau)}/\tau)\,u(t)$ and $q(t) = e^{-t/\tau}$, where $\tau$ represents the time constant. A continuous static mapping is a prerequisite for stability. Theorem\,\ref{theorem:taomin4continuity} provides a sufficient condition for the filter to guarantee a continuous static mapping. As long as the time constant satisfies Theorem\,\ref{theorem:taomin4continuity}, the static mapping is continuous. We denote the lower bound of the frequency of interference by $\omega_l$.
\begin{theorem} \label{theorem:taomin4continuity}
A current control loop using constant off\nobreakdash-time has minimum on time $T^{\text{min}}_{\text{on}}$ and off time $T_{\text{off}}$. 
The time constant of the first\nobreakdash-order low\nobreakdash-pass filter is $\tau$. 
The interference $w(t)$ is amplitude and bandwidth-limited.
The condition for $\tau$ to guarantee a continuous static mapping is
\begin{align} 
 \frac{\hat{A}_{ub}}{(1-d)\hat{\tau}}\left( 1 + \frac{d}{\sqrt{1 + (2\pi \hat{\omega}_l \hat{\tau})^2}}\right) + \frac{b\hat{I}_{\text{max}}}{(1-d)\hat{\tau}} < 1,
\end{align}
where
\begin{align}
    \hat{T}^{\text{min}} &= \hat{T}^{\text{min}}_{\text{on}} + 1, \quad
    b = e^{-\frac{\hat{T}^{\text{min}}}{\hat{\tau}}}, \quad d = e^{-\frac{\hat{T}^{\text{min}}_{\text{on}}}{\hat{\tau}}}, \quad \hat{\tau} = \frac{\tau}{T_{\text{on}}}, \nonumber \\ 
    \hat{A}_{ub} &= \frac{A_{ub}}{m_1T_{\text{on}}}, \quad
    \hat{I}_{\text{max}} = \frac{I_{\text{max}}}{m_1T_{\text{on}}},\quad 
    \hat{\omega}_l = \frac{\omega_l T_{\text{on}}}{2\pi}.
\end{align}
\end{theorem}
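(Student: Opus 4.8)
\emph{Proof strategy.} The plan is to reduce ``continuity of the static mapping'' to strict monotonicity of the filtered sense signal at the switching instant, and then to certify that monotonicity by a worst\nobreakdash-case estimate over all admissible interferences $w$. With the current command and the incoming state held fixed (the ``static'' regime), the on\nobreakdash-time $t_{\text{on}}[n]$ is fixed implicitly by the switching condition read off from \eqref{eqn:filter_nonlin_sys_fb}: the filtered sense signal
\[
  y(t)\;=\;h_0(t+T_{\text{off}})\,i_c[n-1]\;+\;\bigl(i_m(\cdot)u(\cdot)\ast h\bigr)(t),\qquad i_m(s)=i_p[n-1]-m_2T_{\text{off}}+m_1 s+w(s),
\]
reaches the command level at $t=t_{\text{on}}[n]$. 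If $y$ is strictly increasing on the admissible window $t\ge T^{\text{min}}_{\text{on}}$, this first crossing is the unique crossing, so by the implicit function theorem $t_{\text{on}}[n]$ --- hence, through $i_p[n]=i_p[n-1]-m_2t_{\text{off}}+m_1 t_{\text{on}}[n]$ and the filter recursion, the full one\nobreakdash-step static map --- depends continuously (in fact, locally Lipschitz) on the state. So it suffices to show $\dot y(t)>0$ for all $t\ge T^{\text{min}}_{\text{on}}$ and all admissible $w$ with the state levels bounded by $I_{\text{max}}$.

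\emph{Explicit slope and its decomposition.} Since $h(t)=(e^{-t/\tau}/\tau)u(t)$ realizes a first\nobreakdash-order low pass, $\tau\dot y=i_m-y$; expanding the convolution gives
\[
  \dot y(t)=\underbrace{m_1\bigl(1-q(t)\bigr)}_{\text{useful ramp slope}}\;+\;\underbrace{\tfrac{q(t)}{\tau}\bigl(i_p[n-1]-m_2T_{\text{off}}-y(0)\bigr)}_{\text{filter memory}}\;+\;\underbrace{\tfrac{1}{\tau}\Bigl(w(t)-\bigl(w(\cdot)u(\cdot)\ast h\bigr)(t)\Bigr)}_{\text{filtered interference}}.
\]
On $t\ge T^{\text{min}}_{\text{on}}$ one has $q(t)=e^{-t/\tau}\le q(T^{\text{min}}_{\text{on}})=d$, so the useful slope is at least $m_1(1-d)$ and the memory prefactor is at most $d/\tau$. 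For the interference term I would write $w-wu\ast h$ as $w$ minus its two\nobreakdash-sided steady\nobreakdash-state filtered value plus the tail contribution of the pre\nobreakdash-on\nobreakdash-phase history of $w$: the first piece is the output of the complementary high pass $j\omega\tau/(1+j\omega\tau)$, whose gain never exceeds one, hence has magnitude $\le A_{ub}$; the second piece equals $q(t)$ times a steady\nobreakdash-state low\nobreakdash-pass value, and since $w$ is band\nobreakdash-limited above $\omega_l$, where the low\nobreakdash-pass gain is at most $1/\sqrt{1+(\omega_l\tau)^2}$, it has magnitude $\le q(t)A_{ub}/\sqrt{1+(\omega_l\tau)^2}\le d\,A_{ub}/\sqrt{1+(\omega_l\tau)^2}$. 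Thus $\bigl|w(t)-(wu\ast h)(t)\bigr|\le A_{ub}\bigl(1+d/\sqrt{1+(\omega_l\tau)^2}\bigr)$.

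\emph{Memory term and assembly.} Unwinding $y(0)$ through the preceding off\nobreakdash-phase --- it is the inherited filter output $i_c[n-1]$ decayed over $T_{\text{off}}$, together with the ramp\nobreakdash-and\nobreakdash-interference correction accrued during that phase --- shows that the state/command levels enter only after decay over a full minimum period, so with $|i_p|,|i_c|\le I_{\text{max}}$ the memory term is bounded in magnitude by $b\,I_{\text{max}}/\tau$ on the window. Collecting the three estimates,
\[
  \dot y(t)\;\ge\;m_1(1-d)\;-\;\tfrac{1}{\tau}A_{ub}\Bigl(1+\tfrac{d}{\sqrt{1+(\omega_l\tau)^2}}\Bigr)\;-\;\tfrac{1}{\tau}\,b\,I_{\text{max}},
\]
and dividing by $m_1(1-d)$ and substituting $\hat\tau=\tau/T_{\text{on}}$, $\hat A_{ub}=A_{ub}/(m_1T_{\text{on}})$, $\hat I_{\text{max}}=I_{\text{max}}/(m_1T_{\text{on}})$ together with the identity $2\pi\hat\omega_l\hat\tau=\omega_l\tau$ turns ``$\dot y>0$'' into exactly the stated inequality. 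Hence $y$ is strictly increasing on the admissible window, and the reduction in the first step closes the argument.

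\emph{Main obstacle.} The hard part is the interference estimate: justifying the $1/\sqrt{1+(\omega_l\tau)^2}$ attenuation rigorously from only the amplitude/bandwidth hypothesis on $w$ (it is exact for a single in\nobreakdash-band tone but must be argued for general band\nobreakdash-limited inputs), and cleanly peeling off the genuinely steady\nobreakdash-state part of the filtered interference from the finite\nobreakdash-horizon startup transient so that the $d$ --- not a larger constant --- appears in the first bracket. The memory\nobreakdash-term bookkeeping (seeing why $y(0)$ enters only after a \emph{full}\nobreakdash-period decay, giving $b$ rather than $d$, and why the residual off\nobreakdash-phase ramp/interference pieces are dominated by the stated terms) is the other place that needs care.
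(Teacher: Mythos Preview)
Your proposal is correct and follows essentially the same route as the paper: reduce continuity to $\dot y>0$, split the derivative into the ramp term $m_1(1-q)$, the memory/state term, and the interference term, and decompose the latter into the forced response $g=w\ast h$ plus the transient $g(0)q(t)$ --- your high\nobreakdash-pass/low\nobreakdash-pass framing is exactly the paper's identity $\tfrac{d}{dt}[w\,u\ast h]=g'(t)+g(0)q(t)/\tau$ together with the Fourier bounds $|g'|\le A_{ub}/\tau$ and $|g(0)|\le A_{ub}/\sqrt{1+(\omega_l\tau)^2}$. The one small discrepancy is in the memory term: the paper does \emph{not} argue that both state levels enter after full\nobreakdash-period decay, but rather drops the valley\nobreakdash-current piece $(i_p-m_2T_{\text{off}})q(t)/\tau$ altogether (it is nonnegative, hence helps, and in the worst case vanishes) and bounds only the $i_c\,q(T_{\text{off}})q(t)/\tau$ piece by $bI_{\text{max}}/\tau$; your sketch of this step should be adjusted accordingly.
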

\begin{proof}
\subsection{Continuity Theorem for General Linear Filter}
The continuous static mapping is equivalent to the monotonic current sensor output. Therefore, the continuity condition can be equivalently expressed as
\begin{align} \label{eqn:cont_filter}
    & (i_p-m_2T_{\text{off}})\;\frac{\text{d}}{\text{d}t} \left[u(t) \ast h(t)\right]
     +  i_c q(T_{\text{off}})\; \frac{\text{d}q(t)}{\text{d}t} +
    \frac{\text{d}}{\text{d}t}\left\{\left[m_1t+w(t)\right]u(t)\ast h(t)\right\}
    >0,
\end{align}
for all $t > 0$.

For all $t>0$, $u(t)\ast h(t)$ and $m_1t\,u(t)\ast h(t)$ are differentiable
\begin{align}
     \label{eqn:conv_eq1} \frac{\text{d}}{\text{d}t} \left[ m_1t\, u(t)\ast h(t) \right] &=  m_1u(t)\ast h(t), \\
     \label{eqn:conv_eq2} \frac{\text{d}}{\text{d}t}\left[ u(t)\ast h(t)\right]  &=  h(t).
\end{align}
The zero state response of the filter given input signal $w(t)u(t)$ is
\begin{align}  \label{eqn:fu_conv_h}
    w(t)u(t)\ast h(t) = \left[w(t)\ast h(t) - \frac{g(0)}{q(0)} q(t)\right]u(t).
\end{align}
where $\left[w(t)\ast h(t)\right] u(t)$ is the forced response and the rest is natural response.
We denote the forced response as $g(t) \triangleq w(t)\ast h(t)$.
For all $t>0$, $w(t)u(t)\ast h(t)$ is differentiable
\begin{align} \label{eqn:conv_eq_new} 
    \frac{\text{d}}{\text{d}t} \left[ w(t) u(t)\ast h(t) \right] = \frac{\text{d}g(t)}{\text{d}t} u(t)- \frac{g(0)}{q(0)}\;\frac{\text{d}q(t)}{\text{d}t} u(t).
\end{align}
By substituting (\ref{eqn:conv_eq1}), (\ref{eqn:conv_eq2}), (\ref{eqn:conv_eq_new}) into (\ref{eqn:cont_filter}), the continuity condition for the static mapping can be equivalently expressed as
\begin{align}
    & (i_p-m_2T_{\text{off}})h(t)
     +  i_c q(T_{\text{off}}) \; \frac{\text{d}q(t)}{\text{d}t}   + m_1\,u(t)\ast h(t) + \frac{\text{d}g(t)}{\text{d}t} u(t)- \frac{g(0)}{q(0)}\;\frac{\text{d}q(t)}{\text{d}t} u(t)
    >0.
\end{align}
\subsection{Continuity Theorem for First-Order Low-Pass Filter}
For a first-order low-pass filter with time constant $\tau$, the impulse response $h(t)$ and zero input response $q(t)$
\begin{align} \label{eqn:ht_def}
    h(t) = \frac{e^{-\frac{t}{\tau}}}{\tau}\,u(t),  \quad q(t) = e^{-\frac{t}{\tau}}\,u(t).
\end{align}
The frequency response function of the filter is
\begin{align} \label{eqn:hjw}
    H(j\omega) = \frac{1}{1+j\omega\tau}.
\end{align}
From (\ref{eqn:ht_def}) and (\ref{eqn:hjw}), we can bound $g$ from the top as
\begin{align}
    g'(t) = & \int_{-\infty}^{+\infty} j\omega W(\omega) H(\omega) e^{j\omega t}\,d\omega  =  \int_{-\infty}^{+\infty} \frac{j\omega W(\omega) }{1 + j\omega\tau} e^{j\omega t} \,d\omega.
\end{align}
\begin{align} \label{eqn:g_pr_t}
    |g'(t)| & \le \int_{-\infty}^{+\infty} \frac{|\omega|}{\sqrt{1 + \omega^2\tau^2}} |W(\omega)|  \,d\omega \le \frac{A_{ub}}{\tau}.
\end{align}
\begin{align}
    g(t) =  \int_{-\infty}^{+\infty} W(\omega) H(\omega) e^{j\omega t}\,d\omega 
    = \int_{-\infty}^{+\infty} \frac{W(\omega)}{1 + j \omega \tau} e^{j\omega t}\,d\omega.
\end{align}
\begin{align}
\label{eqn:s_lower_bound1_simp2}
    |g(t)| \le & \int_{-\infty}^{+\infty} |W(\omega)||H(\omega)|\,d\omega = \int_{-\infty}^{+\infty} \frac{|W(\omega)|}{\sqrt{1 + \omega^2\tau^2}} \,d\omega  \le \nonumber \\
    & \frac{1}{\sqrt{1 + (\omega_{l}\tau)^2}} \int_{-\infty}^{+\infty} |W(\omega)| \,d\omega \le \frac{A_{ub}}{\sqrt{1 + (\omega_{l}\tau)^2}}.
\end{align}
\begin{align} \label{eqn:g_zero_t}
 g(0) = & \int_{-\infty}^{+\infty} W(\omega) H(\omega) e^{j\omega t}\,d\omega \bigg \rvert_{t = 0}
 = \int_{-\infty}^{+\infty} \frac{W(\omega)}{1 + j\omega\tau} \,d\omega.
\end{align}
In the worst-case, \mbox{$I_p = m_1T_{\text{off}}$}. We substitute (\ref{eqn:g_pr_t}) and (\ref{eqn:s_lower_bound1_simp2}) into (\ref{eqn:cont_filter}). When \mbox{$t>0$}, (\ref{eqn:cont_filter}) can be bounded from below by

\begin{align} 
  & i_c q(T_{\text{off}}) \;\frac{\text{d}q(t)}{\text{d}t} + m_1\,u(t)\ast h(t)
    + \frac{\text{d}g(t)}{\text{d}t} u(t) - \frac{g(0)}{q(0)}\;\frac{\text{d}q(t)}{\text{d}t} u(t)  = \nonumber\\
    & -\frac{e^{-\frac{t}{\tau}}}{\tau}q(T_{\text{off}})I_c + m_1(1-e^{-\frac{t}{\tau}}) + g^{'}(t) + \frac{e^{-\frac{t}{\tau}}}{\tau} g(0) \ge \nonumber \\
     &
    -\frac{I_{\text{max}}}{\tau} \exp \left(-\frac{T^{\text{min}}_{\text{on}} + T_{\text{off}}}{\tau}\right)  + m_1(1-e^{-\frac{T^{\text{min}}_{\text{on}}}{\tau}}) - \frac{A_{ub}}{\tau} - \frac{e^{-\frac{T^{\text{min}}_{\text{on}}}{\tau}}}{\sqrt{1 + (\omega_l \tau)^2}} \frac{A_{ub}}{\tau} =  \nonumber \\
    & -\frac{b}{\tau} I_{\text{max}} + m_1(1-d) 
    - \frac{A_{ub}}{\tau} \left( 1 + \frac{d}{\sqrt{1 + (\omega_l\tau)^2}} \right)  \label{eqn:bd_slope},
\end{align}
where \mbox{$b = e^{-\frac{T^{\text{min}}}{\tau}}$}, $T^{\text{min}} = T^{\text{min}}_{\text{on}} + T_{\text{off}}$ and  $d = e^{-\frac{T^{\text{min}}_{\text{on}}}{\tau}}$.
As long as we impose a positive lower bound for (\ref{eqn:bd_slope}), the static mapping is continuous. The proof is complete.
\end{proof}

We next examine how the filter affects the stability of the current control loop.
At the operating point defined by the desired peak inductor current $I_c$, the actual peak inductor current $I_p$, the actual valley inductor current $I_v$, at on time $T_{\text{on}}$, we linearize system (\ref{eqn:filter_nonlin_sys_fb}) as
\begin{align} \label{eqn:filter_lin_sys2}
\tilde i_p[n] = \; & \tilde i_p[n-1]  +m_1 \, \tilde t_{\text{on}}[n], \nonumber \\
\tilde i_c [n] = \; & q(T_{\text{on}}) q(T_{\text{off}}) \,\tilde i_c [n-1] +  c_1 \, \tilde i_p [n] +  c_2 \, \tilde t_{\text{on}}[n],
\end{align}
where 
\begin{align} \label{eqn:prarameterszdomain}
    c_1 = &  u(t) {\ast} h(t)\bigg\rvert_{t = T_{\text{on}}}, \nonumber \\
    c_2 = & -\frac{\text{d}q\,(t)}{\text{d}t}\bigg\rvert_{t = T_{\text{on}}} q(T_{\text{off}})I_c + h(T_{\text{on}})I_v 
    +\;\frac{\text{d}}{\text{d}t}\left[w(t)u(t){\ast}h(t)\right] \bigg\rvert_{t = T_{\text{on}}}.
\end{align}
System (\ref{eqn:filter_lin_sys2}) is represented by the block diagram in Fig.\;\ref{fig:filteredccb_fb}. The detailed derivations are provided in Appendix \ref{sec:proofoptsc}.

The gain term $K$, pole-zero pair, and feedback gains $\psi_1$ and $\psi_2$ are introduced as
\begin{align}
K = \frac{1}{1-e^{-\frac{T_{\text{on}}}{\tau}}}.
\end{align}
\begin{align}
    P(z) = 1 - e^{-\frac{T}{\tau}}z^{-1}.
\end{align}
\begin{align}
     \psi_1 = \frac{w(T_{\text{on}})}{\tau} - \frac{e^{-\frac{T_{\text{on}}}{\tau}}}{\tau}\int_{-\infty}^{+\infty} \frac{W(\omega)}{j\omega}\,d\omega.
\end{align}
\begin{align}
     \psi_2 = - \frac{e^{-\frac{T}{\tau}}}{\tau}I_c + \frac{e^{-\frac{T_{\text{on}}}{\tau}}}{\tau}(I_p - m_2T_{\text{off}}).
\end{align}


Theorem \ref{theorem:taomin4stability} provides the condition for $\tau$ so that the current control loop is globally asymptotically stable.
If $\tau$ satisfies the condition, the globally asymptotic stability is guaranteed.



\begin{theorem} \label{theorem:taomin4stability}
A current control loop using constant off\nobreakdash-time has a minimum on time $T_{\text{on}}^{\text{min}}$ and fixed off time $T_{\text{off}}$. The time constant of the first\nobreakdash-order, low\nobreakdash-pass filter is $\tau$.
The interference $w(t)$ is amplitude and bandwidth limited.
The bound on $\tau$ to guarantee the globally asymptotic stability of the current control loop is
\begin{align} 
 k_0\frac{1}{\hat{\tau}} + k_1 \frac{\hat{A}_{ub}}{\hat{\tau}}+ k_2 \frac{\hat{A}_{ub}}{\hat{\tau}\sqrt{1 + (2\pi\hat{\omega}_l\hat{\tau})^2} } < \frac{1}{2},
\end{align}
and
\begin{align} 
 k_3 \frac{\hat{I}_{\text{max}}}{\hat{\tau}} + \frac{\hat{A}_{ub}}{\hat{\tau}} + \frac{\hat{A}_{ub}}{\hat{\tau}\sqrt{1 + (2\pi\hat{\omega}_l\hat{\tau})^2}} < \frac{1}{2},
\end{align}
where
\begin{align}
    k_0 & = \frac{d(\hat{T}_{\text{on}}^{\text{min}} + \hat{\tau} d -\hat{\tau})}{(1-d)^2}, \quad k_1 = \frac{1}{(1-d)}, \; \hat{T}^{\text{min}} = \hat{T}^{\text{min}}_{\text{on}} + 1, \nonumber \\
    b & = e^{-\frac{\hat{T}^{\text{min}}}{\hat{\tau}}}, \quad d = e^{-\frac{\hat{T}^{\text{min}}_{\text{on}}}{\hat{\tau}}}, \quad \hat{\tau} = \frac{\tau}{T_{\text{on}}}, \quad \hat{A}_{ub} = \frac{A_{ub}}{m_1T_{\text{on}}}, \nonumber \\
    k_2 & = 1 + \frac{(1+d)d}{(1-d)^2}, \quad k_3 = \frac{d-b}{(1-d)^2}, \quad \hat{I}_{\text{max}} = \frac{I_{\text{max}}}{m_1T_{\text{on}}}, \quad
    \hat{\omega}_l = \frac{\omega_l T_{\text{on}}}{2\pi}.
\end{align}
\end{theorem}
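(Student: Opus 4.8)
The plan is to pass to the linearised loop, reduce it to a closed-loop characteristic polynomial in $z$, certify that its roots stay strictly inside the unit circle by a discrete-time root-location test, and then absorb the dependence on the operating point and on the interference into worst-case bounds of exactly the kind used in the proof of Theorem~\ref{theorem:taomin4continuity}. First I would take the linearised model (\ref{eqn:filter_lin_sys2}) and impose the peak-current-mode modulation constraint in perturbed form --- at turn-off the filtered sense signal equals the unperturbed reference --- which yields one algebraic relation giving $\tilde t_{\text{on}}[n]$ in terms of the incoming states. Eliminating $\tilde t_{\text{on}}[n]$ from (\ref{eqn:filter_lin_sys2}) leaves a $2\times2$ closed-loop update; using the identity $Kc_1=1$ with $c_1=u(t)\ast h(t)\rvert_{t=T_{\text{on}}}=1-e^{-T_{\text{on}}/\tau}$, I expect its characteristic polynomial to reorganise (after scaling) into the monic polynomial built from the gain $K$, the pole--zero factor $P(z)=1-e^{-T/\tau}z^{-1}$, and the feedback gains $K\psi_1$, $K\psi_2$ introduced just before the theorem. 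Combined with the continuous static mapping guaranteed by Theorem~\ref{theorem:taomin4continuity}, showing that this polynomial is Schur uniformly over all admissible operating points and interferences is what upgrades local to global asymptotic stability within the peak-current-mode robustness framework.

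\textbf{Root-location test.} To the resulting characteristic polynomial I would apply the Jury / Schur--Cohn criterion (for a quadratic $z^2+a_1z+a_0$: Schur iff $|a_0|<1$ and $|a_1|<1+a_0$). Written out in terms of $K$, $e^{-T/\tau}$, $e^{-T_{\text{on}}/\tau}$, $\psi_1$ and $\psi_2$, these two (respectively two-sided) conditions are the two inequalities of the theorem; the normalisation that factors out $K$ is what leaves the bound $\tfrac12$ on the right-hand sides, and the two branches of the root locus in Fig.~\ref{fig:filter_cotcm} (positive/negative feedback) are exactly the two signs of $K\psi_i$ over which the absolute values take the worst case.

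\textbf{Worst-case bounds and non-dimensionalisation.} The coefficients $a_0,a_1$ still carry the operating point through $I_c$ and $I_p-m_2T_{\text{off}}=I_v$ (in $\psi_2$ and the $c_2$-terms) and the interference through $w(T_{\text{on}})$ and $\int W(\omega)/(j\omega)\,d\omega$ (in $\psi_1$). I would bound the operating-point part by $|I_c|,|I_v|\le I_{\text{max}}$ with the triangle inequality, producing the terms $\propto \hat I_{\text{max}}/\hat\tau$ with coefficient $k_3$; and the interference part by reusing the frequency-domain estimates of the continuity proof --- a ``direct'' derivative-of-convolution term bounded by $A_{ub}/\tau$ as in (\ref{eqn:g_pr_t}), and a ``filtered'' term (of the $g(0)=\int W(\omega)/(1+j\omega\tau)\,d\omega$ type, or $\int W(\omega)/(j\omega)\,d\omega$ controlled by the bandwidth floor $\omega_l$) picking up the roll-off factor $1/\sqrt{1+(\omega_l\tau)^2}$ as in (\ref{eqn:s_lower_bound1_simp2}) --- which is where the $1/\sqrt{1+(2\pi\hat\omega_l\hat\tau)^2}$ terms originate. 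Finally, each inequality is monotone in $T_{\text{on}}$ in the direction that makes $T_{\text{on}}=T_{\text{on}}^{\text{min}}$ the stability-margin worst case (hence $d,b$ in the statement are frozen at the minimum on-time); substituting $T_{\text{on}}\to T_{\text{on}}^{\text{min}}$, dividing by $m_1T_{\text{on}}$ and rescaling time by $T_{\text{on}}$ matches the collected coefficients to $k_0,k_1,k_2$, with $k_0/\hat\tau$ packaging the $\partial/\partial T_{\text{on}}$-sensitivity of the gain $1/c_1=K$ (terms like $K^2(T_{\text{on}}^{\text{min}}/\tau)e^{-T_{\text{on}}^{\text{min}}/\tau}$ and $Ke^{-T_{\text{on}}^{\text{min}}/\tau}$).

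\textbf{Main obstacle.} The delicate steps are the first two: deriving the closed-loop characteristic polynomial from the implicit modulation constraint and collapsing it --- with correct signs --- into the $K$, $P(z)$, $\psi_1$, $\psi_2$ form, and then reading off the precise pair of Jury inequalities, since any mis-grouping here reappears as a wrong $k_i$. Secondary difficulties are the worst-case bound on the $\int W(\omega)/(j\omega)\,d\omega$ contribution to $\psi_1$ (finite only because the interference bandwidth is bounded away from $0$ by $\omega_l$) and the monotonicity argument justifying the reduction to $T_{\text{on}}=T_{\text{on}}^{\text{min}}$.
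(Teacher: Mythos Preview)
Your plan diverges from the paper's argument at the stability-certification step, and the divergence matters. The paper does \emph{not} apply a Jury/Schur--Cohn test to the linearised characteristic polynomial; it recasts the loop as the nonlinear feedback interconnection of Fig.~\ref{fig:model_filter_nonlinear} (a linear block $G$ with gain $\le 2/m_1$, a nonlinear contraction $F$ with gain $\le 1/(1-d)$, and the static nonlinearity $\xi(\tilde t_{\text{on}})$) and invokes the Small Gain Theorem $\|G\|_2\,\|\xi\|_2\,\|F\|_2<1$. The $\tfrac12$ on the right-hand side is exactly the product $\|G\|_2\|F\|_2\le \tfrac{2}{m_1(1-d)}$ moved across, not any normalisation of a Jury inequality. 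The \emph{two} inequalities in the statement are not the two Jury conditions either: they are the two branches of $B_\xi=\max\{\,|\psi_2^{\min}|+\psi_1^{\max},\ \psi_2^{\max}+\psi_1^{\max}\,\}$, obtained by bounding $\xi'(\tilde t_{\text{on}})=\psi_1(t_{\text{on}})+\psi_2(t_{\text{on}})$ from below (substituting $i_c=0$, $t_{\text{on}}=T_{\text{on}}^{\min}$ after eliminating $i_p-m_2T_{\text{off}}$ via the equilibrium relation (\ref{eqn:equic})) and from above (substituting $i_c=I_{\max}$, $t_{\text{on}}\to\infty$). That is where $k_0,k_1,k_2$ versus $k_3$ separate, and the extra $1/(1-d)$ from $\|F\|_2$ is what turns the raw $\psi$-bounds into the $(1-d)^2$ denominators.

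Your bookkeeping on the interference bounds and the passage to $T_{\text{on}}^{\min}$ is in the right spirit, but the core machinery you propose has a gap: uniform Schur stability of the linearisation over all operating points does \emph{not} by itself yield global asymptotic stability of the nonlinear discrete map, and the paper does not claim any such upgrade --- it gets global stability directly from the small-gain estimate on the nonlinear loop. If you try to push the Jury route through, you will find neither a natural source for the $\tfrac12$ nor a clean split into the two stated inequalities, and you will still owe a separate large-signal argument.
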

\begin{proof}
\begin{figure}
    \centering
    \includegraphics[width  = 10cm]{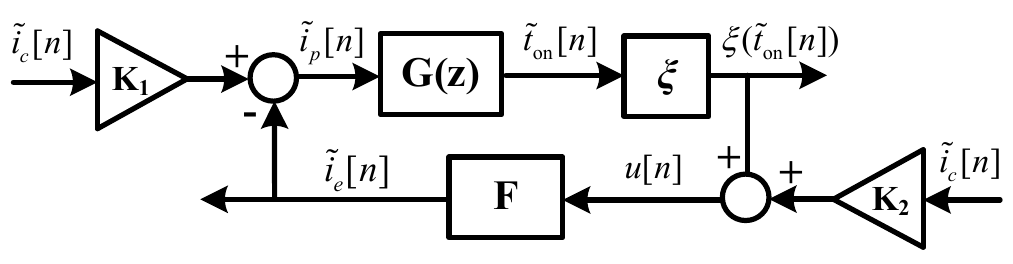}
    \caption{Block diagram of the current control loop with filter.}
    \label{fig:model_filter_nonlinear}
\end{figure}
The current control loop with filter can be modeled as the block diagram in Fig.\,\ref{fig:model_filter_nonlinear}, where
\begin{align}
    K_1  = \, & e^{-\frac{T_{\text{off}}}{\tau}}, \\
    K_2 = \, & (e^{-\frac{T_{\text{off}}}{\tau}}-1),  \\
    \xi(\tilde{t}_{\text{on}}) = \, & \left[I_c e^{-\frac{T_{\text{off}}}{\tau}} - (I_p - m_2T_{\text{off}}) \right] \left[ \exp{\left(-\frac{T_{\text{on}}+\tilde{t}_{\text{on}}}{\tau}\right)} - \exp{\left(-\frac{T_{\text{on}}}{\tau}{\tau}\right)}\right] \nonumber \\
    & + \left[w(t)\, u(t) \ast h(t)\right]\bigg\rvert^{t =
    \tilde{t}_{\text{on}} + 
    T_{\text{on}}}_{t = T_{\text{on}}}.
\end{align}
From the small gain theorem,
\begin{align} \label{eqn:ss_gf}
    \|G\|_2 \cdot \|\xi\|_2 \cdot \|F\|_2 < 1.
\end{align}
$\tilde{i}_e[n]$ is defined as
\begin{align}
    \tilde{i}_e[n] = e^{-\frac{T_{\text{off}}}{\tau}} \tilde{i}_c[n] - \tilde{i}_p[n].
\end{align}
The nonlinear subsystem $F$ follows
\begin{align}
    \tilde{i}_e[n] = \exp{\left(-\frac{\tilde{t}_{\text{on}}[n]+T_{\text{on}}}{\tau}\right)} \tilde{i}_e[n-1] + u[n].
\end{align}
We can find the $L_2$ gain of the subsystem $G$ and $F$ as
\begin{align}
    \|G\|_2 \le & \frac{2}{m_1},
    \\
    \|F\|_2 \le & \frac{1}{1-e^{-\frac{T^{\text{min}}_{\text{on}}}{\tau}}}.
\end{align}
We can find the $L_2$ gain of the nonlinear function $\xi$ by evaluating its derivatives 
\begin{align}
    \frac{\text{d}\xi}{\text{d}\tilde{t}_{\text{on}}} =  \psi_1(t_{\text{on}}) + \psi_2(t_{\text{on}}),
\end{align}
where
\begin{align}
    \label{eqn:psi1_ton} \psi_1(t_{\text{on}}) & = \frac{\text{d}g(t)}{\text{d}t}\Bigg\rvert_{t = t_{\text{on}}} + \frac{e^{-\frac{t_{\text{on}}}{\tau}}}{\tau}g(0)\,d\omega,  \\
     \label{eqn:psi2_ton} \psi_2(t_{\text{on}}) & = 
    \frac{(i_p -m_2T_{\text{off}})}{\tau}e^{-\frac{t_{\text{on}}}{\tau}} - \frac{I_c}{\tau} e^{-\frac{t_{\text{on}} + T_{\text{off}}}{\tau}},\\
    g(t) & = \int_{-\infty}^{+\infty} \frac{W(\omega)}{1 + j \omega \tau} e^{j\omega t}\,d\omega.
\end{align}
$\psi_1(t_{\text{on}})$ can be bounded from above as 
\begin{align}
    |\psi_1(t_{\text{on}})| & <  \frac{A_{ub}}{\tau} + \frac{1}{\sqrt{ 1 + \omega_l^2\tau^2}} \frac{A_{ub}}{\tau} \triangleq \psi_1^{\text{max}}.
\end{align}
To bound $\psi_2(t_{\text{on}})$, we have the following relationship of $i_c$, $i_p$ and $t_{\text{on}}$ in transition as
\begin{align} \label{eqn:equic}
    i_c = & \frac{\left[i_p -m_2T_{\text{off}} + m_1 \, t + w(t) u(t)\ast h(t)\right]\bigg\vert_{t=t_{\text{on}}}
    }{1-q(t_{\text{on}})q(T_{\text{off}})} \nonumber \\
    = & \, \frac{1-e^{-\frac{t_{\text{on}}}{\tau}}}{1-e^{-\frac{T}{\tau}}}(i_p - m_2T_{\text{off}}) + \left(1 + \frac{(e^{-\frac{t_{\text{on}}}{\tau}}-1)}{\frac{t_{\text{on}}}{\tau}}\right) \frac{m_1t_{\text{on}}}{1-e^{-\frac{T}{\tau}}} \nonumber \\
    & + \frac{g(t_{\text{on}}) - g(0) e^{-\frac{t_{\text{on}}}{\tau}}}{1-e^{-\frac{T}{\tau}}}.
\end{align}
We substitute (\ref{eqn:equic}) into (\ref{eqn:psi2_ton})  
\begin{align}
    \psi_2(t_{\text{on}}) = & \frac{d^{'}-b^{'}}{(1-d^{'})\tau} i_c
    - \frac{d^{'}}{(1-d^{'})\tau}
    \left( 1 + \frac{d^{'}-1}{t_{\text{on}}/\tau}\right)m_1t_{\text{on}} \nonumber \\
    & - \frac{1-b^{'}}{1-d^{'}}\frac{d^{'}}{\tau} \frac{g(t_{\text{on}}) - g(0) e^{-\frac{t_{\text{on}}}{\tau}}}{1-e^{-\frac{T}{\tau}}},
\end{align}
where $b^{'} = e^{-\frac{T}{\tau}}$, $T = t_{\text{on}} + T_{\text{off}}$, and $d^{'} = e^{-\frac{t_{\text{on}}}{\tau}}$.\\
We observe that $\psi_2(t_{\text{on}})$ is a monotonic increasing function of $i_c$ because
\begin{align}
    \frac{\partial\, \psi_2(t_{\text{on}})}{\partial\,i_c} =  \frac{d^{'}-b^{'}}{(1-d^{'})\tau} > 0.
\end{align}
$\psi_2(t_{\text{on}})$ can be bounded from below by substituting \mbox{$i_c = 0$} and \mbox{$t_{\text{on}} = T^{\text{min}}_{\text{on}}$}. We denote this lower bound by $\psi_2^{\text{min}}$,
\begin{align}
    \psi_2(t_{\text{on}}) > &  - \frac{d}{(1-d)\tau}
    \left( 1 + \frac{d-1}{T^{\text{min}}_{\text{on}}/\tau}\right)m_1T^{\text{min}}_{\text{on}} - \frac{1+d}{1-d}\frac{d}{\tau} \frac{A_{ub}}{\sqrt{1 + \omega_l^2\tau^2}},
\end{align}
where $b = e^{-\frac{T^{\text{min}}}{\tau}}$, $T^{\text{min}} = T_{\text{on}}^{\text{min}} + T_{\text{off}}$ and $d = e^{-\frac{T_{\text{on}}^{\text{min}}}{\tau}}$.\\
$\psi_2(t_{\text{on}})$ can be bounded from above by substituting $i_c = I_{\text{max}}$ and $t_{\text{on}} = \infty $. We denote this upper bound by $\psi_2^{\text{max}}$,
\begin{align}
    \psi_2(t_{\text{on}}) < &  \frac{d-b}{(1-d)\tau} I_{\text{max}},
\end{align}
where $b = \exp{(-\frac{T^{\text{min}}}{\tau})}$, $T^{\text{min}} = T_{\text{on}}^{\text{min}} + T_{\text{off}}$, and $d = \exp{(-\frac{T_{\text{on}}^{\text{min}}}{\tau})}$.\\
We denote the upper bound of the $L_2$ gain of $\xi$ by $B_{\xi}$
\begin{align}
    B_{\xi} = \text{max}\left\{
    \psi_{2}^{\text{min}}-\psi_{1}^{\text{max}},   \psi_{2}^{\text{max}}+\psi_{1}^{\text{max}}
    \right\}.
\end{align}
From (\ref{eqn:ss_gf}), the stability of the current control loop is guaranteed if the filter $\tau$ satisfies the condition
\begin{align}
    \frac{2}{m_1}\cdot\frac{1}{1-d}\cdot B_{\xi} < 1.
\end{align}
\end{proof}

\section{Conclusion}
By applying the \emph{5S} framework \cite{Cui2018a} and Small Gain Theorem \cite{Okuyama2014}, this paper develops two theoretical results for high-frequency current-mode control using low-pass filters: (1) the continuity condition of the static mapping; (2) a large-signal stability criterion of the dynamical mapping.
The results allow the cut-off frequency of filters to be well below the switching frequency and still have good performance.
\newpage
\begin{appendices}
\section{Linearized Model of the Current Control Loop with Low-Pass Filter} \label{sec:proofoptsc}
\subsection{Linearized Model of the Current Control Loop with the General Linear Filter}
By assuming the continuity condition, we linearize the model of current control loop as
\begin{align}
     \tilde{i}_c [n] = \, & q(T_{\text{on}})q(T_{\text{off}})\, \tilde{i}_c[n-1] +
     q^{'}(t)\bigg\rvert_{t = T_{\text{on}}} q(T_{\text{off}}) I_c\, \tilde{t}_{\text{on}}[n] \nonumber \\
     & + \left( u(t) * h(t)\right)\bigg\rvert_{t  = T_{\text{on}}}  \tilde{i}_p[n-1] \nonumber \\
     & + (I_p-m_1T_{\text{off}})\left( u(t) * h(t)\right)^{'} \bigg\rvert_{t = T_{\text{on}}} \tilde{t}_{\text{on}} [n]
     \nonumber \\
     & + \left(m_1t\, u(t) * h(t)\right)^{'}\bigg\rvert_{t = T_{\text{on}}} \tilde{t}_{\text{on}} [n] \nonumber \\ 
     & + \left(w(t)\, u(t) * h(t)\right)^{'}\bigg\rvert_{t = T_{\text{on}}} \tilde{t}_{\text{on}} [n].
\end{align}
For all $t>0^{+}$, $u(t)*h(t)$ and $m_1t\,u(t)*h(t)$ that are differentiable
\begin{align}
     \label{eqn:conv_eq1_app} \left( m_1t\, u(t)*h(t) \right)^{'} &=  m_1u(t)*h(t), \\
     \label{eqn:conv_eq2_app} \left( u(t)*h(t) \right)^{'} &=  h(t).
\end{align}
By substituting (\ref{eqn:conv_eq1_app}), (\ref{eqn:conv_eq2_app}), we have
\begin{align} 
     \tilde{i}_c [n] = & q(T_{\text{on}})q(T_{\text{off}})\, \tilde{i}_c[n-1] + \left( u(t) * h(t)\right)\bigg\rvert_{t  = T_{\text{on}}}  \tilde{i}_p[n] \nonumber \\
     & + q^{'}(t)\bigg\rvert_{t = T_{\text{on}}} q(T_{\text{off}}) I_c\,\tilde{t}_{\text{on}}[n] \nonumber \\
     & + (I_p-m_2T_{\text{off}}) h(T_{\text{on}}) \tilde{t}_{\text{on}} [n] \nonumber \\
     & +\left(w(t)\, u(t) * h(t)\right)^{'}\bigg\rvert_{t= T_{\text{on}}} \tilde{t}_{\text{on}} [n].
\end{align}
We denote
\begin{align}
    c_1 = & \left( u(t) * h(t)\right)\bigg\rvert_{t  = T_{\text{on}}} \nonumber \\
    c_2 = & q^{'}(t)\bigg\rvert_{t = T_{\text{on}}} q(T_{\text{off}}) I_c\,+(I_p-m_2T_{\text{off}}) h(T_{\text{on}}), \nonumber \\
    & +\left(w(t)\, u(t) * h(t)\right)^{'}\bigg\rvert_{t= T_{\text{on}}}. \nonumber \\
\end{align}
The resulting model is
\begin{align}
    \tilde i_p[n] = \; & \tilde i_p[n-1]  +m_1 \, \tilde t_{\text{on}}[n], \\
    \tilde i_c [n] = \; & q(T_{\text{on}}) q(T_{\text{off}}) \,\tilde i_c [n-1] +  c_1 \, \tilde i_p [n] +  c_2 \, \tilde t_{\text{on}}[n].
\end{align}
\subsection{Linearized Model of the Current Control Loop with First-Order Low Pass Filter}
The zero state response of the first-order low-pass filter given input signal $u(t)$ is
\begin{align} \label{eqn:u_conv_h}
     u(t)*h(t) = \left( 1 - e^{-\frac{t}{\tau}}\right) u(t).
\end{align}
The zero state response of the first-order low-pass filter given input signal $m_1t\,u(t)$ is
\begin{align} \label{eqn:tu_conv_h}
     m_1t\, u(t)*h(t) = m_1\left(t + (e^{-\frac{t}{\tau}}-1)\tau \right) u(t).
\end{align}
By substituting (\ref{eqn:u_conv_h}) and (\ref{eqn:tu_conv_h}) into (\ref{eqn:equic_app}), the equilibrium can be expressed as
\begin{align} \label{eqn:equic_app}
    I_c = & \frac{\left(\left( I_p -m_2T_{\text{off}} + m_1 \, t + w(t) \right) u(t)*h(t)\right)\bigg\vert_{t=T_{\text{on}}}
    }{1-q(T_{\text{on}})q(T_{\text{off}})} \nonumber \\
    = & \, \frac{1-e^{-\frac{T_{\text{on}}}{\tau}}}{1-e^{-\frac{T}{\tau}}}(I_p - m_2T_{\text{off}}) + \left(1 + \frac{(e^{-\frac{T_{\text{on}}}{\tau}}-1)}{\frac{T_{\text{on}}}{\tau}} \right) m_1T_{\text{on}} \nonumber \\
    & + \frac{g(T_{\text{on}}) - \frac{g(0)}{q(0)} q(T_{\text{on}})}{1-e^{-\frac{T}{\tau}}}.
\end{align}

The resulting model is
\begin{align} \label{eqn:filter_lin_sysx1}
\tilde i_p[n] = \; & \tilde i_p[n-1]  +m_1 \, \tilde t_{\text{on}}[n], \nonumber \\
\tilde i_c [n] = \; & b \,\tilde i_c [n-1] +  c_1 \, \tilde i_p [n] +  c_2 \, \tilde t_{\text{on}}[n],
\end{align}
where
\begin{align}
    c_1 & = 1 - d, \nonumber \\
    c_2 & =  -\frac{b}{\tau}\,I_c + \frac{d}{\tau}\,I_v
    + \frac{w(T_{\text{on}})}{\tau} + \frac{e^{-\frac{T_{\text{on}}}{\tau}}}{\tau}\int_{-\infty}^{+\infty} \frac{W(\omega)}{j\omega}\,d\omega, \nonumber \\
    b &=  e^{-\frac{T}{\tau}},  \nonumber \\
    d &= e^{-T_{\text{on}}/\tau}.
\end{align}

\end{appendices}
\newpage

\bibliographystyle{ieeetr}
\bibliography{main}

\end{document}